\newcommand\blfootnote[1]{%
  \begingroup
  \renewcommand\thefootnote{}\footnote{#1}%
  \addtocounter{footnote}{-1}%
  \endgroup
}
\newtheorem{theorem}{Theorem}[section]
\newtheorem{corollary}[theorem]{Corollary}
\newtheorem{lemma}[theorem]{Lemma}
\newtheorem{definition}[theorem]{Definition}
\newlength{\blank}
\newcommand{\1}{\mathds{1}}
\newcommand{\cP}{\mathcal{P}}
\newcommand{\cC}{\mathcal{C}}
\newcommand{\cM}{\mathcal{M}}
\newcommand{\cN}{\mathcal{N}}
\newcommand{\cS}{\mathcal{S}}
\newcommand{\cT}{\mathcal{T}}
\newcommand{\cX}{\mathcal{X}}
\newcommand{\cY}{\mathcal{Y}}
\newcommand{\ox}{\otimes}
\begin{document}

\title{Quantum Hypothesis Testing Lemma for Deterministic Identification over Quantum Channels}

\author{\IEEEauthorblockN{Pau Colomer\textsuperscript{1}}
\IEEEauthorblockA{\textit{Chair of Theoretical}\\
\textit{Information Technology}\\
\textit{Technische Universit\"at}\\
\textit{M\"unchen, Germany}\\
pau.colomer@tum.de}
\and
\IEEEauthorblockN{Christian Deppe\textsuperscript{2}}
\IEEEauthorblockA{\textit{Institute for Communications}\\
\textit{Technology}\\
\textit{Technische Universit\"at}\\
\textit{Braunschweig, Germany}\\
christian.deppe@tu-bs.de}
\and
\IEEEauthorblockN{Holger Boche\textsuperscript{2,3}}
\IEEEauthorblockA{\textit{Chair of Theoretical}\\
\textit{Information Technology}\\
\textit{Technische Universit\"at}\\
\textit{M\"unchen, Germany}\\
boche@tum.de}
\and
\IEEEauthorblockN{Andreas Winter\textsuperscript{1,4}}
\IEEEauthorblockA{\textit{Department Mathematik/Informatik}\\
\textit{Abteilung Informatik}\\
\textit{Universit\"at zu}\\ 
\textit{K\"oln, Germany}\\
andreas.winter@uni-koeln.de}
}

\maketitle

\begin{abstract}
In our previous work, we presented the \emph{Hypothesis Testing Lemma}, a key tool that establishes sufficient conditions for the existence of good deterministic identification (DI) codes for memoryless channels with finite output, but arbitrary input alphabets. In this work, we provide a full quantum analogue of this lemma, which shows that the existence of a DI code in the quantum setting follows from a suitable packing in a modified space of output quantum states. 
Specifically, we demonstrate that such a code can be constructed using product states derived from this packing. 
This result enables us to tighten the capacity lower bound for DI over quantum channels beyond the simultaneous decoding approach. In particular, we can now express these bounds solely in terms of the Minkowski dimension of a certain state space, giving us new insights to better understand the nature of the protocol, 
and the separation between simultaneous and non-simultaneous codes. We extend the discussion with a particular channel example for which we can construct an optimum code.
\end{abstract}

\begin{IEEEkeywords}
Quantum information; 
identification via quantum channels;
communication via quantum channels.
\end{IEEEkeywords}

\blfootnote{{\textsuperscript{1}Institute for Advanced Study, TUM, Garching, Germany.}\vspace{1pt}

{\textsuperscript{2}6G-life, 6G research hub,
Braunschweig and M\"unchen,
Germany.}\vspace{1pt}

{\textsuperscript{3}Munich Quantum Valley and MCQST, M\"unchen, Germany.}\vspace{1pt}

{\textsuperscript{4}ICREA {\&} Universitat Aut\`onoma de Barcelona, Barcelona, Spain.}
}


\section{Introduction}
Shannon's classical model of communication \cite{Shannon:TheoryCommunication} can be generalized to a quantum setting where a receiver aims to reconstruct a classical message encoded in a quantum state and sent through $n$ uses of a memoryless quantum channel. In such a setting, the number of distinct messages $M$ that can be reliably transmitted -- that is, with arbitrarily small probability of error for sufficiently long codewords -- grows exponentially with $n$, leading to the linear relation $\log M\sim Rn$, with $R=\frac1n \log M$ the $\emph{rate}$ of the code, and $\log$ the binary logarithm. The supremum of all achievable rates in the asymptotic limit of large $n$ is known as the \emph{capacity} $C$ and is given by the celebrated Holevo-Schumacher-Westmoreland (HSW) theorem and some subsequent refinements \cite{holevo:capacity,SW:capacity,Winter:Strong_converse,ON:Strong_converse}.

This setting contrasts with the \emph{identification} problem, where the receiver is not required to recover the sent message, but only to check whether it matches a specific message of his interest. The decoding is therefore limited to a hypothesis test, reducing the communication complexity of the task \cite{Ja:ID_easier}. A full characterization of identification over classical discrete memoryless channels (DMCs) was first provided by Ahlswede and Dueck \cite{AD:ID_ViaChannels}, and later generalized to the quantum setting by L\"ober \cite{loeber:PhDThesis}. Their results show that the number $N$ of identifiable messages can grow exponentially faster with the number $n$ of uses of the channel than in transmission, i.e., $\log N\sim2^{nR}$. This means that we can identify exponentially more messages than we can transmit.

Interestingly, the exponential increase of the code size that identification codes offer is lost if deterministic encodings (without any use of randomness) are imposed. As a matter of fact, it was already noted in \cite{AD:ID_ViaChannels} that DI over any DMC could only achieve a linear scaling with $n$ (the same as in transmission). 
Deterministic codes, however, are in general easier to implement, simulate, and to explicitly construct \cite{DI_simpler_impl,DI_explicit_construction,VDTB:practical-DI}; so, in spite of the poorer performance, an interest was kept in DI codes due to potential applications. This interest was renewed when it was proven that certain channels with continuous input can have the superlinear (so-called \emph{linearithmic}) scaling $\log N\sim Rn\log n$ of the code length \cite{SPBD:DI_power,DI-poisson_mc}. I.e., it is better than transmission and DI over DMCs, which can only achieve a linear scaling, but worse than general randomized identification, which can achieve an exponential scaling of the code length $\log N$.

In recent work \cite{CDBW:DI_proceedings,CDBW:DI_classical,CDBW:Reliability_ICC,CDBW:reliability_Arxiv}, it has been proven that this superlinear behaviour exhibited in DI codes over some channels is in fact a general feature of deterministic identification over channels $W$ with discrete output $\cY$ but arbitrary input alphabets $\cX$, and that the corresponding capacities are related to the fractal dimension of a certain modification of the output probability set. The code construction part of these results is based on the so-called Hypothesis Testing Lemma, which shows that the entropy typical set $\cT_{x^n}^\delta$ for each possible code word $x^n$, and $0<\delta\leq\sqrt{n}\log|\cY|$ can be a decent surrogate of the optimal identification test: 

\begin{lemma}[{Colomer~\emph{et~al.}~\cite{CDBW:DI_classical}}]
\label{lemma:HTL}
Consider $n$ uses of the classical channel $W:\cX\rightarrow\cY$ and two points $x^n,{x'}^n \in \mathcal{X}^n$ such that $1-\frac12\left\|W_{x^n}-W_{{x'}^n}\right\|_1 \leq \epsilon$. Then,
\[\begin{split}
  W_{{x'}^n}(\mathcal{T}_{x^n}^\delta) 
   &\leq 2\exp\left(-\delta^2/36K(|\mathcal{Y}|)\right)\\
   &\quad+ \epsilon\left(1 + 2^{2\delta\sqrt{n}}2^{H(W_{x^n})-H(W_{{x'}^n})}\right),
\end{split}\]
with $K(d) = (\log\max\{d,3\})^2$, $W_{x^n}$ the output probability distribution of the input $x^n$, and $H(W_{x^n})$ the corresponding Shannon entropy. 
\hfill\qed
\end{lemma}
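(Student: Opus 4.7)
The plan is a three-way decomposition of the event $\{Y^n \in \mathcal{T}_{x^n}^\delta\}$: I would partition by (a) the likelihood-ratio region $A := \{y^n : W_{x^n}(y^n) \geq W_{{x'}^n}(y^n)\}$ versus its complement, and within $A^c$ further by the typical set $\mathcal{T}_{{x'}^n}^\delta$ of the other distribution. The three resulting pieces are designed to produce, respectively, the $\epsilon$ summand, the $2\exp(-\delta^2/36K(|\mathcal{Y}|))$ error term, and the $\epsilon\,2^{2\delta\sqrt{n}}\,2^{H(W_{x^n})-H(W_{{x'}^n})}$ term.

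On $\mathcal{T}_{x^n}^\delta \cap A$ we have pointwise $W_{{x'}^n}(y^n) = \min\{W_{x^n},W_{{x'}^n}\}(y^n)$, so enlarging the sum to all of $\mathcal{Y}^n$ gives
\[
  W_{{x'}^n}(\mathcal{T}_{x^n}^\delta \cap A) \leq \sum_{y^n} \min\{W_{x^n}(y^n),W_{{x'}^n}(y^n)\} = 1 - \tfrac12\|W_{x^n}-W_{{x'}^n}\|_1 \leq \epsilon.
\]
The symmetric identity on $A^c$ reads $\min = W_{x^n}$, which also delivers the auxiliary bound $W_{x^n}(A^c) \leq \epsilon$ that I will plug into the next step.

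For $\mathcal{T}_{x^n}^\delta \cap A^c$ I would split by $\mathcal{T}_{{x'}^n}^\delta$. The portion outside $\mathcal{T}_{{x'}^n}^\delta$ is bounded by $W_{{x'}^n}((\mathcal{T}_{{x'}^n}^\delta)^c) \leq 2\exp(-\delta^2/36K(|\mathcal{Y}|))$, which is the standard concentration of $-\log W_{{x'}^n}(Y^n) = \sum_i -\log W_{{x'}_i}(Y_i)$ around its mean $H(W_{{x'}^n})$. For the triple intersection $S := \mathcal{T}_{x^n}^\delta \cap A^c \cap \mathcal{T}_{{x'}^n}^\delta$ I would use a refined volume bound: since $W_{x^n}(y^n) \geq 2^{-H(W_{x^n})-\delta\sqrt{n}}$ on $\mathcal{T}_{x^n}^\delta$, any $S\subseteq \mathcal{T}_{x^n}^\delta$ satisfies $|S| \leq 2^{H(W_{x^n})+\delta\sqrt{n}}\,W_{x^n}(S)$, and here $W_{x^n}(S) \leq W_{x^n}(A^c) \leq \epsilon$, giving $|S| \leq \epsilon\,2^{H(W_{x^n})+\delta\sqrt{n}}$. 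Combined with the pointwise upper bound $W_{{x'}^n}(y^n) \leq 2^{-H(W_{{x'}^n})+\delta\sqrt{n}}$ valid on $\mathcal{T}_{{x'}^n}^\delta$, this produces $W_{{x'}^n}(S) \leq \epsilon\,2^{2\delta\sqrt{n}}\,2^{H(W_{x^n})-H(W_{{x'}^n})}$, as required.

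The main technical obstacle is the concentration inequality itself. Since $-\log W_{{x'}_i}(Y_i)$ is not uniformly bounded above (it blows up where $W_{{x'}_i}$ is small), Hoeffding is unavailable and one needs a Bernstein-type tail bound that controls both the varentropy and the moment generating function in terms of $K(d) = (\log\max\{d,3\})^2$. The cutoff $\max\{d,3\}$ and the explicit constant $36$ in the exponent should drop out of the optimization of the Bernstein parameters; this is the only part of the proof that is not pure combinatorial bookkeeping, while the rest of the argument is elementary once that concentration lemma is in hand.
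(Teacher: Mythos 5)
Your decomposition is correct and is essentially the intended argument: your likelihood-ratio set $A$ is exactly the classical optimal (Helstrom) test that plays the role of the projector $S$ in this paper's quantum proof of Lemma~\ref{lemma:qHTL}, your further split along $\mathcal{T}_{{x'}^n}^\delta$ mirrors the truncation step there, and your cardinality/AEP estimate on the triple intersection reproduces the $2^{2\delta\sqrt{n}}2^{H(W_{x^n})-H(W_{{x'}^n})}$ term, giving exactly the stated constants (with the union bound replacing the pinching inequality, which is why the classical bound has $\epsilon$ where the quantum one has $2\epsilon$). The concentration inequality with constant $36K(|\mathcal{Y}|)$ is the same standard ingredient the paper itself imports (cf.~Eq.~\eqref{eq:typicality}, citing Lemma~II.1 of the prior work), so treating it as known rather than re-deriving it via a Bernstein-type bound is consistent with the intended proof.
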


From this, it is shown that a good classical DI code can be constructed just from a packing in the output probability space with large enough distance between its elements \cite[Thm.~III.2]{CDBW:DI_classical}. The performance of such code can be then studied to calculate an achievability (lower) bound for the classical DI capacity \cite[Thm.~V.7]{CDBW:DI_classical}. These results are not directly generalizable to the quantum setting as Lemma \ref{lemma:HTL} is not available for quantum states. However, through a couple of tricks, the classical lemma can be generalised to the quantum case. Let us prepare the ground 
with some definitions.

\begin{definition}
\label{def:qID}
An $(n,N,\lambda_1,\lambda_2)$-\emph{DI code} for a classical-quantum channel $W:\cX\rightarrow\cS(B)$ is a collection $\{(u_j,E_j):j\in[N]\}$ of code words $u_j\in\cX^n$ and POVM elements $0\leq E_j\leq\1$ on $B^n$, such that for all $j\neq k\in[N]$
\[
  \Tr W_{u_j}^n E_j \geq 1-\lambda_1,
  \quad 
  \Tr W^n_{u_j} E_k \leq \lambda_2,
\]
where $W^n_{u_j} = W_{x_1}\otimes\cdots\otimes W_{x_n} \in\cS(B^n)$ is the output quantum state when the input sequence is $u_j=x_1\ldots x_n\in\cX^n$. 

The largest $N$ such that an $(n,N,\lambda_1,\lambda_2)$-DI code exists is denoted $N_{\text{DI}}(n,\lambda_1,\lambda_2)$.
\end{definition}

Notice that, while in quantum transmission the decoding consists of a fixed POVM, the outcome of which is used to guess the message that was sent; in identification, the measurement the receiver applies is dependent on the message $j\in[N]$ that he wants to identify, for which the hypothesis test $(E_j,\1-E_j)$ is going to serve as decoder. The different effects $E_j$ do not necessarily need to be the elements of a POVM. In fact, due to the possibility of complementarity of quantum observables, 
in general the tests $(E_j,\1-E_j)$ are not even simultaneously measurable.

However, if we insist that they are, meaning that there is a POVM $(D_m:m\in\cM)$ on $B^n$ such that each $(E_j,\1-E_j)$ is a coarse-graining of it, i.e.~there is a subset $\cM_j\subset\cM$ with $E_j = \sum_{m\in\cM_j} D_m$, the DI code is called \emph{simultaneous} \cite{loeber:PhDThesis}. The largest code size is then denoted $N_{\text{DI}}^{\text{sim}}(n,\lambda_1,\lambda_2)$. 

The main finding of \cite{CDBW:DI_classical} is that the message length $\log N$ of DI codes generically scales as $n\log n$, thus motivating the definition  of the slightly superlinear DI capacity as 
\[
  \dot{C}_{\text{DI}}(W) 
    := \inf_{\lambda_1,\lambda_2>0} \liminf_{n\rightarrow\infty} 
    \frac{1}{n\log n}\log N_{\text{DI}}(n,\lambda_1,\lambda_2),
\]
and similarly for the simultaneous DI capacity $\dot{C}_{\text{DI}}^{\text{sim}}(W)$.


There is a simple way of constructing a simultaneous DI code for the cq-channel $W$: choose a POVM $T=(T_y:y\in\cY)$ on $B$ and concatenate $W$ with it, obtaining a classical channel $\overline{W}(y|x)=\Tr W_xT_y$. Using the same measurement $T$ on each of the $n$ channel uses transforms the cq-channel $W^n$ into the classical i.i.d.~channel $\overline{W}^n$, so Lemma \ref{lemma:HTL} can be applied, a good DI code constructed, and the slightly superlinear simultaneous DI capacity of $W$ found to be lower-bounded by the slightly superlinear DI capacity of $\overline{W}$.

\begin{theorem}[{Colomer~\emph{et~al.}~\cite{CDBW:DI_classical}}]
\label{thm:cq-DI}
With the previously defined channels, let $\widehat{\cX}=\overline{W}(\cX)\subset\cP(\cY)$, $\widetilde{\cX}=W(\cX)\subset\cS(B)$. Then, the slightly superlinear simultaneous and general DI capacities of a cq-channel $W:\cX\rightarrow\cS(B)$ are bounded as follows (for any $\lambda_1,\lambda_2>0$ with $\lambda_1+\lambda_2<1$):
\[
\frac14{\underline{d}_M\!\left(\sqrt{\!\widehat{\cX}}\right)}
   \leq \dot{C}_{\text{DI}}^\text{sim}(W) 
   \leq \dot{C}_{\text{DI}}(W)
   \leq\frac12 \underline{d}_M\!\left(\!\sqrt{\!\widetilde{\cX}}\right),
\]
with $\sqrt{\!\widetilde{\cX}} = \left\{ \sqrt{\sigma} : \sigma\in\widetilde{\cX} \right\}$ and $\underline{d}_M$ the lower Minkowski dimension (see the preliminaries in Section \ref{sec:preliminaries}).
\end{theorem}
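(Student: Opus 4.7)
The middle inequality $\dot{C}_{\text{DI}}^{\text{sim}}(W)\leq\dot{C}_{\text{DI}}(W)$ is immediate, since every simultaneous DI code is in particular a DI code. For the lower bound I would follow the recipe sketched in the paragraph preceding the theorem: fix the POVM $T$ on $B$, form the induced classical channel $\overline{W}(y|x)=\Tr W_xT_y$, and invoke the classical DI capacity result \cite[Thm.~V.7]{CDBW:DI_classical}, which yields $\dot{C}_{\text{DI}}(\overline{W})\geq\tfrac{1}{4}\underline{d}_M(\sqrt{\widehat{\cX}})$. Any $(n,N,\lambda_1,\lambda_2)$-DI code for $\overline{W}^n$ lifts verbatim to a simultaneous $(n,N,\lambda_1,\lambda_2)$-DI code for $W^n$: the product POVM $D_{y^n}=T_{y_1}\otimes\cdots\otimes T_{y_n}$ is a common fine-graining of all the induced binary tests $(E_j,\1-E_j)$, and the lower bound follows.

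For the upper bound the plan is to extract from any DI code a Hilbert--Schmidt packing of square-root states in a tensor product space, and then bound its size in terms of the covering number of $\sqrt{\widetilde{\cX}}$. Given an $(n,N,\lambda_1,\lambda_2)$-code $\{(u_j,E_j)\}$, the defining inequalities combine for $j\neq k$ into $\|W^n_{u_j}-W^n_{u_k}\|_1\geq 2(1-\lambda_1-\lambda_2)$, so Fuchs--van de Graaf yields $F(W^n_{u_j},W^n_{u_k})\leq\sqrt{1-(1-\lambda_1-\lambda_2)^2}$. Combining $\Tr\sqrt{\rho}\sqrt{\sigma}\leq F(\rho,\sigma)$ with the multiplicativity of both sides under tensor products and the identity $\|\sqrt{\rho}-\sqrt{\sigma}\|_2^2=2-2\Tr\sqrt{\rho}\sqrt{\sigma}$, this translates into
\[
  \left\|\sqrt{W^n_{u_j}}-\sqrt{W^n_{u_k}}\right\|_2^2 = 2-2\prod_i\Tr\sqrt{W_{u_{j,i}}}\sqrt{W_{u_{k,i}}} \geq 2C
\]
for a positive constant $C=C(\lambda_1,\lambda_2)$, so the operators $\sqrt{W^n_{u_j}}$ form an HS packing of radius $\sqrt{2C}$ in $\cB(B^{\otimes n})$.

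To count this packing I would fix an $\epsilon$-net $\cN_\epsilon$ of $\sqrt{\widetilde{\cX}}$ in HS norm, of size $M(\epsilon)$, replace each single-letter factor $\sqrt{W_{u_{j,i}}}$ by its nearest net element $A_{j,i}\in\cN_\epsilon$, and assemble $\bigotimes_i A_{j,i}$. The crucial estimate---the source of the factor $\tfrac{1}{2}$---is that this product approximates $\sqrt{W^n_{u_j}}$ within HS distance $\sqrt{n}\,\epsilon$, not $n\epsilon$: since both operators have unit HS norm, $\|\sqrt{W_{u_{j,i}}}-A_{j,i}\|_2\leq\epsilon$ implies $\Tr A_{j,i}\sqrt{W_{u_{j,i}}}\geq 1-\epsilon^2/2$, so the tensor-product inner product is at least $(1-\epsilon^2/2)^n\geq 1-n\epsilon^2/2$ and the squared HS error is at most $n\epsilon^2$. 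Choosing $\epsilon$ strictly below $\sqrt{C/(2n)}$ makes the assignment $u_j\mapsto\bigotimes_i A_{j,i}$ injective, yielding $N\leq M(\epsilon)^n$. Finally I would pick a subsequence $\epsilon_k\to 0$ realising the $\liminf$ in the definition of $\underline{d}_M$, set $n_k=\lceil C/(2\epsilon_k^2)\rceil$ so that $\log(1/\epsilon_k)\sim\tfrac{1}{2}\log n_k$, and combine to obtain $\liminf_n\log N/(n\log n)\leq\tfrac{1}{2}\underline{d}_M(\sqrt{\widetilde{\cX}})$.

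The main technical hurdle is precisely this $\sqrt{n}$ tensor-product estimate: a naive telescoping triangle-inequality bound only yields $n\epsilon$ and would produce the weaker upper bound $\underline{d}_M(\sqrt{\widetilde{\cX}})$ instead of $\tfrac{1}{2}\underline{d}_M(\sqrt{\widetilde{\cX}})$. A secondary subtlety is that the lower Minkowski dimension is defined as a $\liminf$ in $\epsilon$, so one has to discretize along a sequence of blocklengths $n_k$ adapted to the $\epsilon$-sequence realising that $\liminf$, rather than for every $n$ uniformly.
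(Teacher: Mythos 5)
Your proposal is correct and follows essentially the same route as the paper's (which states this theorem with a proof sketch for the lower bound and cites \cite{CDBW:DI_classical} for the rest): the lower bound by measuring with a fixed POVM and lifting the resulting classical DI code to a simultaneous quantum code, and the converse by turning any DI code into a constant-radius Hilbert--Schmidt packing of square-root output states via Fuchs--van de Graaf and counting it against a letterwise net at scale $\epsilon\sim n^{-1/2}$, which is precisely the source of the prefactor $\tfrac12$. Only two cosmetic repairs are needed: take the net centers inside $\sqrt{\widetilde{\cX}}$ (so they are positive semidefinite with unit Hilbert--Schmidt norm, as your inner-product estimate requires), and round the blocklength down, e.g.\ $n_k=\lfloor C/(2\epsilon_k^2)\rfloor-1$, rather than up, since $n_k=\lceil C/(2\epsilon_k^2)\rceil$ gives $\epsilon_k\geq\sqrt{C/(2n_k)}$ and spoils the strict inequality needed for injectivity; neither change affects the asymptotics.
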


The simultaneous lower bound is of course automatically a lower bound to the general case. In contrast, the upper bound is valid for arbitrary DI codes. In the present work, we develop (in Section \ref{sec:lemma}) a quantum analogue to the Hypothesis Testing Lemma \cite[Lemma~III.1]{CDBW:DI_classical}, and we use it (in Section \ref{sec:qDI}) to construct a code without the need of a fixed POVM at the output, that is, without the simultaneity restriction. From this construction, we can directly find a tighter lower bound on the non-simultaneous DI capacity. Specifically, we prove
\begin{equation}
  \label{eq:final_bounds}
\frac14\underline{d}_M\!\left(\!\sqrt{\!\widetilde{\cX}}\right)
   \leq \dot{C}_{\text{DI}}(W)
   \leq \frac12\underline{d}_M\!\left(\!\sqrt{\!\widetilde{\cX}}\right),
\end{equation}
showing that both the upper and lower bound can be written in terms of the same dimension of the output state space. This result also sheds some new light on the possible separation of simultaneous and non-simultaneous DI capacities.

These results for cq-channels can be extended to fully quantum channels under the restriction that only product states are used to encode the messages. Specifically, for a quantum channel $\cN:A\rightarrow B$ we fix a subset $\cX\subset\cS(A)$ of input quantum states, and define an $(n,N,\lambda_1,\lambda_2)$-DI $\cX$-code as a collection $\{(\rho_j=\rho_j^{(1)}\otimes\dots\otimes\rho_j^{(n)},E_j):j\in[N]\}$ with $\rho_j^{(i)}\in\cX$, such that $\Tr\cN^n_{\rho_j}E_j\geq 1-\lambda_1$ and $\Tr\cN^n_{\rho_k}E_j\leq \lambda_2$ for all $j\neq k\in[N]$. Then, the capacity bounds of an $\cX$-code are the same as in Eq.~\eqref{eq:final_bounds}, with $\widetilde\cX:=\cN(\cX)$.

Finally, in Section \ref{sec:example} we study a particular case of a channel which allows a simple code construction bypassing the use of the quantum hypothesis testing lemma. We proof that our construction is optimal and attains the upper bound on the general capacity formula \eqref{eq:final_bounds}, meaning that, for this particular channel $W$, 
$\dot{C}_\text{DI}(W) = \frac{1}{2} \underline{d}_M\!\left(\widetilde{\cX}\right)$. We finish with a discussion in Section \ref{sec:conclusions}.



\section{Preliminaries}
\label{sec:preliminaries}
We consider a cq-channel $W:\cX \rightarrow \cS(B)$ with a finite-dimensional Hilbert space $B$ and an arbitrary measure space $\cX$, meaning that $W$ is a measurable map. The output states are denoted $W_x\in\cS(B)$ for $x\in\cX$. For $n$ copies, $W^n:\cX^n \rightarrow \cS(B^n)$ maps $x^n=x_1\ldots x_n$ to $W_{x^n} = W_{x_1}\ox\cdots\ox W_{x_n}$.

\begin{definition}
The \emph{entropy typical projector} $\Pi_{x^n}^\delta$ of $W_{x^n}$ and for $\delta>0$ is defined as
\[
  \Pi_{x^n}^\delta := \left\{ S(W_{x^n})-\delta\sqrt{n} \leq -\log W_{x^n} \leq S(W_{x^n})+\delta\sqrt{n} \right\} \!, 
\] 
signifying the projector onto the span of all eigenspaces of $-\log W_{x^n}$ corresponding to eigenvalues in the interval $\left[S(W_{x^n})-\delta\sqrt{n}; S(W_{x^n})+\delta\sqrt{n}\right]$.
\end{definition}

The properties of the typical projector that we shall need are these: first, by definition, $\Pi_{x^n}^\delta$ commutes with $W_{x^n}$, and the \emph{truncated state} satisfies
\begin{equation}
  \label{eq:truncated}
  \widetilde{W}_{x^n} := \Pi_{x^n}^\delta W_{x^n} \Pi_{x^n}^\delta \leq W_{x^n}.
\end{equation}
Next (cf.~\cite[Lemma~II.1]{CDBW:DI_classical}),
\begin{equation}
  \label{eq:typicality}
  \Tr W_{x^n} \Pi_{x^n}^\delta \geq 1-2\exp\left( -\delta^2/36K(|B|) \right),
\end{equation}
where $K(d) = (\log\max\{d,3\})^2$ and both $\log$ and $\exp$ are understood to be to base $2$. Finally, again by definition, 
\begin{equation}
  \label{eq:AEP}
  2^{-S(W_{x^n})-\delta\sqrt{n}} \Pi_{x^n}^\delta 
     \leq \widetilde{W}_{x^n}
     \leq 2^{-S(W_{x^n})+\delta\sqrt{n}} \Pi_{x^n}^\delta.
\end{equation}

As already commented in the introduction, the capacity bounds are written in terms of the \emph{Minkowski dimension} (also known as covering, Kolmogorov or entropy dimension \cite{Falconer:fractal}) of the square-root output space. The Minkowski dimension of a set $F$ is defined as a ratio between the number of elements that you need to create a covering $\Gamma_\delta$ (or a packing $\Pi_\delta$) of $F$ and the size of each element $\delta$ in the limit of $\delta\rightarrow0$. It is quite common that this limit does not exist, so we define the upper and lower Minkowski dimensions through the superior and inferior limits as follows:
\begin{align*}
 \overline{d}_M(F) &:= \limsup_{\delta\rightarrow0}\frac{\log \Gamma_\delta(F)}{-\log\delta} 
  = \limsup_{\delta\rightarrow0} \frac{\log \Pi_\delta(F)}{-\log\delta}, \\
 \underline{d}_M(F) &:= \liminf_{\delta\rightarrow0}\frac{\log \Gamma_\delta(F)}{-\log\delta}
  = \liminf_{\delta\rightarrow0} \frac{\log \Pi_\delta(F)}{-\log\delta}.
\end{align*}
Intuitively, this is an object that captures the 
geometric complexity of the set $F$ at small scales. This is especially important in intrinsically irregular (fractal) sets, whereas for smooth objects (such as manifolds) the Minkowski dimension coincides with the topological one.

\section{Quantum Hypothesis Testing Lemma}
\label{sec:lemma}
We prove in this section the quantum analogue of \cite[Lemma~III.1]{CDBW:DI_classical}.
It is worth noting here, that this is not at all a trivial extension from its classical counterpart. There, the proof is based on the study of the intersection between different subsets that could serve as identification tests. In the quantum case, however, the tests are no longer subsets of classical strings, but quantum projectors, so some classical technical tools (for example intersections and unions) are no longer available. Instead, we use ideas from \cite{Hayashi_2002} to work out a quantum version which, while inspired by its classical sister and having a similar aspect (with only one slightly worse constant factor), is (and has to be) different in its methods. 
\begin{lemma}
\label{lemma:qHTL}
For $n$ uses of the cq-channel $W$, consider two points $x^n,\,{x'}^n \in \cX^n$ such that $1-\frac12\|W_{x^n}-W_{{x'}^n}\|_1 \leq \epsilon$. Then, 
\[
\begin{split}
\Tr W_{{x'}^n} \Pi_{x^n}^\delta 
    \leq & \,\, 2\exp\left(-\delta^2/36K(|B|)\right)\\ 
         &+ 2\epsilon\left( 1 + 2^{2\delta\sqrt{n}}2^{S(W_{x^n})-S(W_{{x'}^n})} \right).
\end{split}
\]
\end{lemma}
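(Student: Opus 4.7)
My plan is to mimic the classical argument of~\cite{CDBW:DI_classical} but to replace the set-theoretic ``restrict to the intersection of typical sets and sum the pointwise minima'' step by Hayashi's pinching inequality applied with respect to the optimal Helstrom test between $W_{x^n}$ and $W_{{x'}^n}$. As in~\eqref{eq:truncated}, write $\widetilde W_{{x'}^n}=\Pi_{{x'}^n}^\delta W_{{x'}^n}\Pi_{{x'}^n}^\delta$; since $\Pi_{{x'}^n}^\delta$ commutes with $W_{{x'}^n}$, one has $W_{{x'}^n}-\widetilde W_{{x'}^n}=(\1-\Pi_{{x'}^n}^\delta)W_{{x'}^n}\ge 0$, hence
\[
\Tr W_{{x'}^n}\Pi_{x^n}^\delta = \Tr\widetilde W_{{x'}^n}\Pi_{x^n}^\delta + \Tr(\1-\Pi_{{x'}^n}^\delta)W_{{x'}^n}\Pi_{x^n}^\delta,
\]
and the second summand is at most $\Tr(\1-\Pi_{{x'}^n}^\delta)W_{{x'}^n}\le 2\exp(-\delta^2/36K(|B|))$ by~\eqref{eq:typicality}. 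This accounts for the first term of the announced bound, and what remains is to prove $\Tr\widetilde W_{{x'}^n}\Pi_{x^n}^\delta\le 2\epsilon\bigl(1+2^{2\delta\sqrt n}2^{S(W_{x^n})-S(W_{{x'}^n})}\bigr)$.

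To bring $\epsilon$ into play, introduce the Helstrom--Neyman--Pearson projector $\Pi_*$ onto the non-negative eigenspace of $W_{{x'}^n}-W_{x^n}$. A direct spectral computation shows that the hypothesis $1-\tfrac12\|W_{{x'}^n}-W_{x^n}\|_1\le\epsilon$ is equivalent to $\Tr(\1-\Pi_*)W_{{x'}^n}+\Tr\Pi_* W_{x^n}\le\epsilon$, so each of those two non-negative traces is itself $\le\epsilon$. Now apply the pinching inequality $A\le 2\bigl(\Pi_* A\Pi_*+(\1-\Pi_*)A(\1-\Pi_*)\bigr)$ (Hayashi) to $A=\widetilde W_{{x'}^n}$; since $\Pi_{x^n}^\delta\ge 0$ and the trace is cyclic, this yields
\[
\Tr\widetilde W_{{x'}^n}\Pi_{x^n}^\delta \le 2\Tr\widetilde W_{{x'}^n}\bigl(\Pi_*\Pi_{x^n}^\delta\Pi_*\bigr) + 2\Tr\widetilde W_{{x'}^n}\bigl((\1-\Pi_*)\Pi_{x^n}^\delta(\1-\Pi_*)\bigr).
\]

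The ``orthogonal'' summand is controlled crudely: because $(\1-\Pi_*)\Pi_{x^n}^\delta(\1-\Pi_*)\le\1-\Pi_*$, it is bounded by $\Tr\widetilde W_{{x'}^n}(\1-\Pi_*)\le\Tr W_{{x'}^n}(\1-\Pi_*)\le\epsilon$, producing the ``$1$'' inside the parentheses of the claim. For the ``aligned'' summand I use the operator AEP~\eqref{eq:AEP} in the form $\Pi_{x^n}^\delta\le 2^{S(W_{x^n})+\delta\sqrt n}W_{x^n}$, whence $\Pi_*\Pi_{x^n}^\delta\Pi_*\le 2^{S(W_{x^n})+\delta\sqrt n}\Pi_* W_{x^n}\Pi_*$, and combine this with the elementary trace inequality $\Tr XY\le\|X\|_\infty\Tr Y$ for PSD $X,Y$ applied to $X=\widetilde W_{{x'}^n}$, $\|\widetilde W_{{x'}^n}\|_\infty\le 2^{-S(W_{{x'}^n})+\delta\sqrt n}$ (again~\eqref{eq:AEP}), and $Y=\Pi_* W_{x^n}\Pi_*$ with $\Tr Y=\Tr\Pi_* W_{x^n}\le\epsilon$, obtaining $\Tr\widetilde W_{{x'}^n}(\Pi_*\Pi_{x^n}^\delta\Pi_*)\le 2^{S(W_{x^n})-S(W_{{x'}^n})+2\delta\sqrt n}\epsilon$. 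Assembling the three estimates gives the claimed inequality. The main obstacle is precisely this last argument: classically one can restrict to $\cT_{x^n}^\delta\cap\cT_{{x'}^n}^\delta$ and sum the pointwise minima $\min(W_{x^n}(y),W_{{x'}^n}(y))$, but quantumly $\Pi_{x^n}^\delta$ and $\Pi_{{x'}^n}^\delta$ need not commute and no canonical operator ``min'' of two states is available; the pinching inequality together with the Helstrom projector serve as the substitute, at the price of the extra factor of $2$ visible in the statement.
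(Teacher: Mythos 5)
Your proof is correct and takes essentially the same route as the paper's: the same truncation split with the typicality bound, the same Helstrom-test error bounds, the pinching inequality with respect to the Helstrom projector, and the two AEP estimates from Eq.~\eqref{eq:AEP}. Pinching the truncated state $\widetilde W_{{x'}^n}$ rather than $\Pi_{x^n}^\delta$ (your $\Pi_*$ is the paper's $\1-S$) is equivalent by cyclicity of the trace, so the two arguments coincide term by term up to the order in which the AEP bounds are applied.
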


\begin{proof}
Let $S$ be the Helstrom optimal projector for the discrimination of the equiprobable hypotheses $W_{x^n}$ and $W_{{x'}^n}$. By the assumption on trace distance, we have
\begin{equation}\label{eq:tr_properties}
  \Tr W_{x^n}(\1-S) \leq \epsilon, \quad \Tr W_{{x'}^n} S \leq \epsilon. 
\end{equation}
Furthermore, we will use the pinching inequality \cite{Hayashi_2002} (see also \cite{Bhatia_2000}), which for the measurement $(S,\1-S)$ gives
\begin{align}
\nonumber \Pi_{x^n}^\delta &\leq 2 S\Pi_{x^n}^\delta S + 2(\1-S)\Pi_{x^n}^\delta(\1-S)\\
                 &\leq 2S + 2(\1-S)\Pi_{x^n}^\delta(\1-S).\label{eq:pinching}
\end{align}

Let us start by dividing the trace as follows:
\begin{equation}\label{eq:split}
\Tr W_{{x'}^n}\Pi_{x^n}^\delta =    \Tr \widetilde{W}_{{x'}^n}\Pi_{x^n}^\delta 
           + \Tr\left(W_{{x'}^n}-\widetilde{W}_{{x'}^n}\right)\Pi_{x^n}^\delta.
\end{equation}
Now, introducing $\eta = 2\exp\left(-\delta^2/36K(|B|)\right)$ as an abbreviation, and using Eq.~\eqref{eq:truncated} first and Eq.~\eqref{eq:typicality} after, we observe that the second term is just:
\[
\begin{split}
\Tr\!\left(W_{{x'}^n}-\widetilde{W}_{{x'}^n}\!\right)\!\Pi_{x^n}^\delta
    &\!\leq\! \Tr W_{{x'}^n}\!-\!\widetilde{W}_{{x'}^n}=1-\Tr W_{{x'}^n}\Pi_{x^n}^\delta\\
    &\!\leq1-1+\eta=\eta.
\end{split}
\]

For the first term in Eq.~\eqref{eq:split} we use the pinching inequality [Eq.~\eqref{eq:pinching}] first, and the trace properties of $S$ [Eq.~\eqref{eq:tr_properties}] after:
\begin{align}
\Tr \widetilde{W}_{{x'}^n}\Pi_{x^n}^\delta
\nonumber    &\leq 2\Tr \widetilde{W}_{{x'}^n} S + 2\Tr \widetilde{W}_{{x'}^n} (\1-S)\Pi_{x^n}^\delta(\1-S)\\ 
    &\leq 2\epsilon + 2\Tr \widetilde{W}_{{x'}^n} (\1-S)\Pi_{x^n}^\delta(\1-S).\label{eq:step}
\end{align}

Observe now that, from Eqs.~\eqref{eq:truncated} and \eqref{eq:tr_properties}, we can bound
\[
\begin{split}
\Tr W_{x^n} \Pi_{x^n}^\delta(\1-S)\Pi_{x^n}^\delta 
    &=\Tr \widetilde{W}_{x^n} (\1-S)\\ 
    &\leq \Tr W_{x^n}(\1-S) \leq \epsilon,
\end{split}
\]
and since $\Pi_{x^n}^\delta(\1-S)\Pi_{x^n}^\delta$ is supported in the typical subspace of $W_{x^n}$, using the lower bound in Eq.~\eqref{eq:AEP}, the cyclicity of the trace, and the idempotency of projectors, we find 
\[
\begin{split}
\Tr \Pi_{x^n}^\delta(\1-S)\Pi_{x^n}^\delta 
    &\leq 2^{S(W_{x^n})+\delta\sqrt{n}}\Tr W_{x^n} \Pi_{x^n}^\delta (\1-S)\Pi_{x^n}^\delta\\ 
    &\leq \epsilon\cdot2^{S(W_{x^n})+\delta\sqrt{n}}.
\end{split}
\]
Also, notice $\Tr \Pi_{x^n}^\delta(\1-S)\Pi_{x^n}^\delta = \Tr (\1-S)\Pi_{x^n}^\delta(\1-S)$, and so, using the upper bound in Eq.~\eqref{eq:AEP}, we can write
\[\begin{split}
\Tr \widetilde{W}_{{x'}^n} &(\1-S)\Pi_{x^n}^\delta(\1-S)\\
   &\leq 2^{-S(W_{x'^n})+\delta\sqrt{n}}\Tr(\1-S)\Pi_{x^n}^\delta\Pi_{x^n}^\delta(\1-S)\\
   &\leq\epsilon\cdot 2^{S(W_{x^n})+\delta\sqrt{n}}2^{-S(W_{{x'}^n})+\delta\sqrt{n}}.
\end{split}\]
Substituting this onto Eq.~\eqref{eq:step} and all that into Eq.~\eqref{eq:split} we conclude the proof:
\[
 \Tr W_{{x'}^n} \Pi_{x^n}^\delta 
    \leq \eta + 2\epsilon\left( 1 + 2^{2\delta\sqrt{n}}2^{S(W_{x^n})-S(W_{{x'}^n})} \right).\vspace{-0.65cm}
\]
\end{proof}

\section{Identification via quantum channels}\label{sec:qDI}
As we have discussed in the introduction, the Quantum Hypothesis Testing Lemma \ref{lemma:qHTL} allows us to construct non-simultaneous DI codes for cq-channels, which permit to improve the achievability bounds for the unrestricted DI capacity. We provide the proves in this section, together with the extension to fully quantum channels with product state encodings and optimistic capacities.  

\begin{theorem}\label{thm:qDI_construction}
    For $n$ uses of the cq-channel $W$, let $0<\delta\leq\sqrt{n}\log|B|$, and assume $N$ codewords $u_j\in\cX^n$ with pairwise distance $1-\frac12\|W_{u_j}-W_{u_k}\|_1 \leq 2^{-3\delta\sqrt{n}}$, i.e.~the $W_{u_j}$ form a packing in $\cS(B^n)$. Then there is a subset $\mathcal{C}\subset[N]$ of cardinality $|\mathcal{C}|\geq N/\left\lceil n\log|B|\right\rceil$ such that the collection $\left\{\left(u_j,E_j=\Pi_{u_j}^\delta\right) : j\in\mathcal{C}\right\}$ is an $(n,|\cC|,\lambda_1,\lambda_2)$-DI code, with $\lambda_1 = \eta$ and $\lambda_2 =\eta + 6\exp\left(-\delta\sqrt{n}\right)$.
\end{theorem}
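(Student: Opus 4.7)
The plan is to take the tests to be $E_j := \Pi_{u_j}^\delta$, the entropy typical projector of the $j$-th codeword's output state, and then verify the two error conditions using the Quantum Hypothesis Testing Lemma of the previous section. The subset $\cC \subset [N]$ will come from a pigeonhole argument on the entropies $S(W_{u_j})$, which is needed to control the entropy difference that appears in the Type II bound.

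First I would handle the Type I (missed detection) error: by Eq.~\eqref{eq:typicality} applied to $W_{u_j}$, we immediately get
\[
  \Tr W_{u_j} \Pi_{u_j}^\delta \geq 1 - 2\exp(-\delta^2/36K(|B|)) = 1 - \eta,
\]
so $\lambda_1 = \eta$ works unconditionally, for the full index set $[N]$. For the Type II (false identification) error, for $j\neq k$ I would invoke Lemma \ref{lemma:qHTL} with $x^n = u_k$, ${x'}^n = u_j$ and $\epsilon = 2^{-3\delta\sqrt{n}}$, obtaining
\[
  \Tr W_{u_j}\Pi_{u_k}^\delta
    \leq \eta + 2\cdot 2^{-3\delta\sqrt{n}}\!\left( 1 + 2^{2\delta\sqrt{n}} 2^{S(W_{u_k})-S(W_{u_j})} \right)\!.
\]
The first term in the parentheses contributes at most $2\cdot 2^{-3\delta\sqrt{n}} \leq 2\cdot 2^{-\delta\sqrt{n}}$, which is comfortably within the target $6\exp(-\delta\sqrt{n})$. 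The issue is the second term, which blows up when $S(W_{u_k}) - S(W_{u_j})$ is large.

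To tame the entropy difference, I would partition $[N]$ according to the value of $S(W_{u_j})$. Since $W_{u_j}\in\cS(B^n)$, we have $S(W_{u_j}) \in [0,n\log|B|]$, so the interval can be split into $\lceil n\log|B|\rceil$ bins of width $1$. By pigeonhole, at least one bin contains a set $\cC\subset[N]$ of size $|\cC|\geq N/\lceil n\log|B|\rceil$. For any $j,k\in\cC$ we then have $S(W_{u_k})-S(W_{u_j})\leq 1$, hence $2^{S(W_{u_k})-S(W_{u_j})}\leq 2$, and the second contribution is bounded by $2\cdot 2^{-3\delta\sqrt{n}}\cdot 2^{2\delta\sqrt{n}}\cdot 2 = 4\cdot 2^{-\delta\sqrt{n}}$. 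Adding the two contributions gives $6\cdot 2^{-\delta\sqrt{n}} = 6\exp(-\delta\sqrt{n})$, so $\lambda_2 = \eta + 6\exp(-\delta\sqrt{n})$, as required.

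The only substantive step is the entropy-binning pigeonhole, which is what forces the $\lceil n\log|B|\rceil$ loss in code size; everything else is bookkeeping of the constants coming from Lemma \ref{lemma:qHTL} and the typicality bound \eqref{eq:typicality}. I do not expect any real obstacle beyond keeping track of the choice $\epsilon = 2^{-3\delta\sqrt{n}}$, which is calibrated exactly so that the factor $2^{2\delta\sqrt{n}}$ from the lemma, together with the single bit of slack from the entropy bin, still leaves the decay $2^{-\delta\sqrt{n}}$ in $\lambda_2$.
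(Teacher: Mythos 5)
Your proposal is correct and follows essentially the same route as the paper's proof: type I error from the typicality bound \eqref{eq:typicality}, type II error from Lemma \ref{lemma:qHTL} with $\epsilon=2^{-3\delta\sqrt{n}}$, and the pigeonhole selection of an entropy bin of width $1$ to control the factor $2^{S(W_{u_k})-S(W_{u_j})}$, yielding exactly $\lambda_2=\eta+6\exp(-\delta\sqrt{n})$. You have merely written out the constant bookkeeping that the paper leaves implicit.
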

\begin{proof}
The proof is analogous to the classical one (c.f.~\cite[Thm.~III.2]{CDBW:DI_classical}). We start by dividing $[N]$ into $S=\left\lceil n\log|B|\right\rceil$ parts $\mathcal{C}_s$ ($s=1,\ldots,S$) in such a way that all $j\in\mathcal{C}_s$ have $S(W_{u_j})\in[s-1;s]$. This is possible since the von Neumann entropies of the states $W_{x^n}$ are in the interval $[0;n\log|B|]$.

Then, we choose $\mathcal{C}$ as the largest $\mathcal{C}_s$, which satisfies the cardinality lower bound $|\mathcal{C}|\geq N/\left\lceil n\log|B|\right\rceil$ by the pigeonhole principle. We bound the first kind of error of through Eq.~\eqref{eq:typicality} and, since within $\mathcal{C}$, the entropies $S(W_{u_j})$ differ by at most $1$ from each other, we can apply Lemma \ref{lemma:qHTL} to bound the second kind of error as claimed.
\end{proof}

The theorem above tells us that if we create a packing with large enough distance between output quantum states, then we can construct a deterministic identification code with the described characteristics. We stress that the described code is not simultaneous, as in general the entropy-typical projectors are not necessarily simultaneously measurable. 

Following closely the classical capacity proofs in \cite{CDBW:DI_classical}, we can now use Theorem \ref{thm:qDI_construction} to improve the capacity bounds for DI over a cq-channel.

\begin{theorem}
\label{thm:cq-DI-capacity}
The slightly superlinear DI capacity of a cq-channel $W:\cX\rightarrow\cS(B)$ is bounded as follows (for any $\lambda_1,\lambda_2>0$ with $\lambda_1+\lambda_2<1$):
\[\begin{split}
\frac14{\underline{d}_M\!\left(\!\sqrt{\!\widetilde{\cX}}\right)}
   \leq \dot{C}_{\text{DI}}(W) 
   &\leq \liminf_{n\rightarrow\infty} \frac{1}{n\log n} \log N_{\text{DI}}(n,\lambda_1,\lambda_2)\\
   &\leq\frac12 \underline{d}_M\!\left(\!\sqrt{\!\widetilde{\cX}}\right),
\end{split}\]
where $\widetilde{\cX}=W(\cX)\subset\cS(B)$ and $\sqrt{\!\widetilde{\cX}} = \left\{ \sqrt{\sigma} : \sigma\in\widetilde{\cX} \right\}$.
\end{theorem}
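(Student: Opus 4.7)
The middle inequality is immediate from the definition of $\dot{C}_{\text{DI}}(W)$ as an infimum over error thresholds, so I focus on the outer bounds, following closely the strategy of the classical \cite[Thm.~V.7]{CDBW:DI_classical}. For the achievability bound $\tfrac14\underline{d}_M(\sqrt{\widetilde{\cX}})\leq\dot{C}_{\text{DI}}(W)$, the plan is to produce a product-state DI code by combining a single-letter packing of $\sqrt{\widetilde{\cX}}$ with a classical Gilbert--Varshamov code, and then invoking Theorem \ref{thm:qDI_construction}. Fix $\epsilon,\lambda_1,\lambda_2>0$, choose the typicality parameter $\delta$ as a constant large enough that $\eta=2\exp(-\delta^2/36K(|B|))<\min\{\lambda_1,\lambda_2\}$, pick a large constant $C>0$, and set $\tau_n:=Cn^{-1/4}$. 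Let $\{W_{x_1},\dots,W_{x_M}\}\subset\widetilde{\cX}$ be a Hilbert--Schmidt $\tau_n$-packing of $\sqrt{\widetilde{\cX}}$; by the $\liminf$-definition of $\underline{d}_M$, for large $n$ we have $\log M\geq(\underline{d}_M(\sqrt{\widetilde{\cX}})-\epsilon)(-\log\tau_n)\sim\tfrac{\underline{d}_M-\epsilon}{4}\log n$, and the packing condition translates into $\Tr(\sqrt{W_{x_k}}\sqrt{W_{x_l}})\leq 1-\tau_n^2/2$. A Gilbert--Varshamov greedy construction in $[M]^n$ then yields $N\geq M^n/|B_H(d-1,n,M)|$ codewords of pairwise Hamming distance at least $d=\lceil\alpha n\rceil$; setting $\alpha:=6\delta\ln 2/C^2$, the tensorization $\Tr(\sqrt{W_{u_j}}\sqrt{W_{u_k}})=\prod_i\Tr(\sqrt{W_{x_{j,i}}}\sqrt{W_{x_{k,i}}})\leq(1-\tau_n^2/2)^d\leq 2^{-3\delta\sqrt{n}}$, combined with the Powers--Stormer inequality $1-\tfrac12\|\rho-\sigma\|_1\leq\Tr(\sqrt{\rho}\sqrt{\sigma})$, gives exactly the hypothesis of Theorem \ref{thm:qDI_construction}. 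The resulting $(n,|\cC|,\lambda_1,\lambda_2)$-DI code satisfies $\log|\cC|\geq(1-\alpha)n\log M-nh(\alpha)-O(\log n)$; dividing by $n\log n$, sending $n\to\infty$, then $C\to\infty$ (so $\alpha\to 0$) and $\epsilon\to 0$, yields the claim.

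For the converse bound $\liminf_n\frac{1}{n\log n}\log N_{\text{DI}}(n,\lambda_1,\lambda_2)\leq\tfrac12\underline{d}_M(\sqrt{\widetilde{\cX}})$, the plan is a covering argument at scale $\tau'\sim 1/\sqrt{n}$. For any $(n,N,\lambda_1,\lambda_2)$-DI code with $\lambda_1+\lambda_2<1$, the standard hypothesis-testing bound gives $\|W_{u_j}-W_{u_k}\|_1\geq 2(1-\lambda_1-\lambda_2)$ for $j\neq k$. Cover $\sqrt{\widetilde{\cX}}$ by $\Gamma_{\tau'}$ Hilbert--Schmidt balls of radius $\tau'$ and, for each product codeword $u_j=(x_{j,1},\ldots,x_{j,n})$, form the label $v_j\in[\Gamma_{\tau'}]^n$ recording which ball each $\sqrt{W_{x_{j,i}}}$ belongs to. If $v_j=v_k$, then $\Tr(\sqrt{W_{x_{j,i}}}\sqrt{W_{x_{k,i}}})\geq 1-2\tau'^2$ at every position, and tensorization of the affinity yields $\|\sqrt{W_{u_j}}-\sqrt{W_{u_k}}\|_2^2=2\bigl(1-\prod_i\Tr(\sqrt{W_{x_{j,i}}}\sqrt{W_{x_{k,i}}})\bigr)\leq 4n\tau'^2$; the companion Powers--Stormer inequality $\|\rho-\sigma\|_1\leq 2\|\sqrt{\rho}-\sqrt{\sigma}\|_2$ then gives $\|W_{u_j}-W_{u_k}\|_1\leq 4\sqrt{n}\tau'$. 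Choosing $\tau':=(1-\lambda_1-\lambda_2)/(2\sqrt{n})$ contradicts the distinguishability bound, forcing all labels $v_j$ to be distinct; hence $N\leq\Gamma_{\tau'}^n$. Evaluating along a subsequence $n_k$ for which $\tau'_{n_k}$ realizes the $\liminf$ defining $\underline{d}_M$, and using $(-\log\tau'_{n_k})/\log n_k\to 1/2$, we obtain the claim.

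The main technical obstacle is that neither trace distance nor fidelity tensorizes as a product, which would give a clean single-letter-to-$n$-shot reduction. I resolve this by working with the quantum affinity $\Tr(\sqrt{\rho}\sqrt{\sigma})$, which tensorizes exactly since $\sqrt{\rho_1\otimes\rho_2}=\sqrt{\rho_1}\otimes\sqrt{\rho_2}$, and converting to and from trace distance via the two sides of the Powers--Stormer inequality. The resulting $\sqrt{n}$ factor in $\|\sqrt{W_{u_j}}-\sqrt{W_{u_k}}\|_2$ (rather than the naive $n$) is crucial: it forces the achievability packing scale to be $\tau\sim n^{-1/4}$ (so that $d\tau^2\sim\sqrt{n}$ matches the $\sqrt{n}$ from typicality) and the converse covering scale to be $\tau'\sim n^{-1/2}$, and the ratio of these exponents in $-\log$ is precisely the factor-of-two gap between $\tfrac14$ and $\tfrac12$ in the final bounds.
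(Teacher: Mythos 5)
Your proposal is correct and follows essentially the same route as the paper's own proof of the achievability bound: a single-letter packing of $\sqrt{\widetilde{\cX}}$ at scale $\sim n^{-1/4}$, a constant-relative-Hamming-distance code over the packing alphabet, conversion of letterwise Hilbert--Schmidt distances into an $n$-fold trace-distance bound (you do this via multiplicativity of $\Tr\sqrt{\rho}\sqrt{\sigma}$ plus Powers--St{\o}rmer, the paper via the equivalent inequality it imports from the classical companion paper), then Theorem \ref{thm:qDI_construction} and the same counting; the converse, which the paper simply cites from \cite{CDBW:DI_classical}, you re-derive by the matching covering argument at scale $\sim n^{-1/2}$. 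The only nitpick is cosmetic: take $\tau'$ strictly smaller than $(1-\lambda_1-\lambda_2)/(2\sqrt{n})$ so that equal labels yield a strict contradiction with the Holevo--Helstrom distinguishability bound.
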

\begin{proof}
We only need to prove the lower bound, as the upper bound has been shown in \cite{CDBW:DI_classical}. We start with a packing $\cX_{n^{-\alpha}}\subset\cX$, that is a set of points such that for all $x\neq x'\in\cX_{n^{-\alpha}}$, $|\sqrt{W_x}-\sqrt{W_{x'}}|_2\geq n^{-\alpha}$. By definition of the lower Minkowski dimension $d=\underline{d}_M\!\left(\!\sqrt{\!\widetilde{\cX}}\right)$, for every $\epsilon > 0$ there is a positive constant $K>0$ such that we can find such a packing with $\lvert{\cX_{n^{-\alpha}}}\rvert \geq (Kn^{\alpha})^{d-\epsilon}$ elements. 

We aim to apply Theorem \ref{thm:qDI_construction}, so we need to bound the trace distance between output distributions. In blocklength $n$, this distance can be related to the following sum of letterwise Euclidean distances \cite{CDBW:DI_classical}: 
\begin{equation}
\label{eq:Trdist->Eucl}
  -\ln\left( 1-\frac12\|W_{x^n}-W_{{x'}^n}\|_1 \right)
  \geq \frac14 \sum_{i=1}^n \left|\sqrt{W_{x_i}} -\sqrt{W_{x_i'}}\right|_2^2.
\end{equation}
By construction, if we let $x^n,{x'}^n\in\cX_{n^{-\alpha}}^n$, each term in the latter sum is either $0$ or at least $n^{-2\alpha}$. Let us now choose a code $\cC_t\subset\cX_{n^{-\alpha}}^n$ with a minimum Hamming distance between its elements, that is $d_H(x^n,{x'}^n) \geq tn$ for all $x^n\neq {x'}^n\in\cC_t$, where $t>0$ is an arbitrarily small constant. We immediately see that for code words in $\cC_t$, the sum in Eq.~\eqref{eq:Trdist->Eucl} will have non-zero elements in at least $tn$ positions, and for each one of those we can lower-bound the norm squared by $n^{-2\alpha}$, so
\begin{equation}
\label{eq:Main_achiev_exponent}
\!\!\!\!-\ln\!\left(\!1-\frac12\|W_{x^n}-W_{{x'}^n}\|_1\!\right)\!\geq\!\frac14 \!\left(n^{-2\alpha}\right)tn=\frac{t}{4}n^{1-2\alpha}.
\end{equation}
Now, to satisfy the conditions of Theorem \ref{thm:qDI_construction} which constructs an identification code from a packing on the output probability set, we require a scaling of at least $\sqrt{n}$ in the exponent of Eq.~\eqref{eq:Main_achiev_exponent}. So, we need $1-2\alpha\geq\frac12$, i.e.~$\alpha\leq\frac14$, and then, by virtue of Theorem \ref{thm:qDI_construction}, it is possible to find asymptotically good DI codes having $N = \left\lfloor \lvert{\cC_t}\rvert/(n\log\lvert{B}\rvert) \right\rfloor$ messages. It is only left to bound the size of the code $\cC_t$. By elementary combinatorics, the Hamming ball around any point in $\cX_{n^{-\alpha}}^n$ of radius $tn$ has 
$\leq \binom{n}{tn} |\cX_{n^{-\alpha}}|^{tn}$ elements. Hence, any maximal code of Hamming distance $tn$ has a number of code words at least the ratio between the total number of elements and the size of the Hamming ball (otherwise the code could be extended):
\begin{equation}\label{eq:size_Ct}
\lvert{\cC_t}\rvert \geq 
  \frac{\lvert{\cX_{n^{-\alpha}}}\rvert^n}{\binom{n}{tn}\lvert{\cX_{n^{-\alpha}}}\rvert^{tn}}
  \geq    2^{-n} \lvert{\cX_{n^{-\alpha}}}\rvert^{n(1-t)}.
\end{equation}

Finally, substituting the bound $|\cX_{n^{-\alpha}}| \geq \left({Kn^\alpha}\right)^{d-\epsilon}$ we obtain
\begin{align}
\nonumber N \geq \frac{\lvert{\cC_t}\rvert}{n\log|B|}
   &\geq 2^{-n-\log n-\log\log|B|} \left[\left(Kn^\alpha\right)^{d-\epsilon}\right]^{n(1-t)}\\
   &\geq 2^{\alpha(1-t)(d-\epsilon)n\log n -O(n)}. \label{eq:general_code_size}
\end{align}
Above we have inferred that any $\alpha<\frac14$ is suitable to guarantee asymptotically vanishing errors, and $\epsilon,t>0$ are arbitrarily small, showing that any $n\log n$-rate below $\frac14 d$ is attainable. Putting everything together we obtain 
\[
\begin{split}
\hspace{0.6cm}\dot{C}_\text{DI}(W)
    &= \inf_{\lambda_1,\lambda_2>0} \liminf_{n\rightarrow\infty}\frac{1}{n\log n} \log N_\text{DI}(n,\lambda_1,\lambda_2)\\
    &\geq \liminf_{n\rightarrow\infty}\frac{1}{n\log n} \log N
    \geq \frac14 \underline{d}_M\!\left(\!\sqrt{\!\widetilde{\cX}}\right).\hspace{0.6cm}\qed
\end{split}
\]
\renewcommand{\qed}{}
\end{proof}

In \cite{CDBW:DI_classical} the general DI-capacity over a cq-channel was lower bounded by the simultaneous DI-capacity. And the bound was expressed in terms of the dimension of a classical space resulting from the concatenation of the cq-channel and a given quantum measurement. Now, we can express the general DI-capacity bounds (both lower and upper) solely in terms of the Minkowski dimension of the (square root) space of output quantum states. Making it more tight, and showing the importance of such a dimension as the key feature when describing the capacity.

As described in the introduction, we can also apply this method to the parallel case of deterministic identification over fully quantum channels, $\cN:A\rightarrow B$ under the restriction that only product states are used in the encoding. 
Then:

\begin{corollary}
\label{cor:restricted-qq-channel}
For a quantum channel $\cN:A\rightarrow B$ and a product state restriction to encodings in $\cX\subset\cS(A)$, the slightly superlinear DI capacity is bounded as follows:
\[\begin{split}
  \frac14\underline{d}_M\!\left(\!\sqrt{\!\widetilde{\cX}}\right)
   \!\leq\! \dot{C}_{\text{DI}}(\cN\vert_{\cX}) 
   &\!\leq\! \liminf_{n\rightarrow\infty} \frac{1}{n\log n} \!\log\! N_{\text{DI},\cX}(n,\lambda_1,\lambda_2)\! \\
   &\!\leq \frac12\underline{d}_M\!\left(\!\sqrt{\!\widetilde{\cX}}\right).
   \hspace{2.4cm}\qed
\end{split}\]
\end{corollary}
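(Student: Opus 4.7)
The plan is to derive Corollary \ref{cor:restricted-qq-channel} as a direct reduction to Theorem \ref{thm:cq-DI-capacity}, by reinterpreting the quantum channel $\cN$ under product-state encodings as a cq-channel whose classical alphabet is $\cX$ itself. Concretely, I would define the cq-channel $W:\cX\rightarrow\cS(B)$ by $W_\rho:=\cN(\rho)$ for $\rho\in\cX$, viewing $\cX\subset\cS(A)$ as an (arbitrary) measurable input alphabet in the sense of Section \ref{sec:preliminaries}. This is legitimate because the definition of a cq-channel in this work only demands a measurable map from an abstract input set to $\cS(B)$, without any further structural assumption on the alphabet.

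Under this identification, any product input $\rho_j=\rho_j^{(1)}\otimes\cdots\otimes\rho_j^{(n)}$ with $\rho_j^{(i)}\in\cX$ corresponds to the sequence $u_j:=\rho_j^{(1)}\cdots\rho_j^{(n)}\in\cX^n$, and the output state coincides in both pictures:
\[
  \cN^n_{\rho_j}
   =\cN(\rho_j^{(1)})\otimes\cdots\otimes\cN(\rho_j^{(n)})
   =W^n_{u_j}.
\]
Consequently, the test probabilities $\Tr\cN^n_{\rho_j}E_k$ and $\Tr W^n_{u_j}E_k$ agree, and an $(n,N,\lambda_1,\lambda_2)$-DI $\cX$-code for $\cN$ is exactly an $(n,N,\lambda_1,\lambda_2)$-DI code for the cq-channel $W$; in particular, $N_{\text{DI},\cX}(n,\lambda_1,\lambda_2)=N_{\text{DI}}(n,\lambda_1,\lambda_2)$ for $W$. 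Moreover, the output-state family of the corollary, $\widetilde{\cX}=\cN(\cX)$, is precisely the set $W(\cX)$ appearing in Theorem \ref{thm:cq-DI-capacity}.

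It then suffices to invoke Theorem \ref{thm:cq-DI-capacity} applied to $W$, which transfers both the lower bound $\tfrac14\underline{d}_M\!\left(\!\sqrt{\widetilde{\cX}}\right)$ and the upper bound $\tfrac12\underline{d}_M\!\left(\!\sqrt{\widetilde{\cX}}\right)$ verbatim to $\dot{C}_\text{DI}(\cN\vert_{\cX})$, together with the corresponding sandwich for $\liminf_n\frac{1}{n\log n}\log N_{\text{DI},\cX}(n,\lambda_1,\lambda_2)$. No substantive obstacle arises: the achievability ingredients (packing in $\sqrt{\widetilde{\cX}}$ together with Theorem \ref{thm:qDI_construction} and the Quantum Hypothesis Testing Lemma \ref{lemma:qHTL}) and the converse argument inherited from \cite{CDBW:DI_classical} depend only on the output family $\widetilde{\cX}\subset\cS(B)$ and on $\cX$ carrying a measurable structure, both of which are unchanged by the reduction. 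The mildly non-trivial point to verify, if anything, is just the measurability of $W=\cN\circ\id_{\cX}$, which is automatic since $\cN$ is continuous (hence measurable) and $\cX$ inherits the Borel structure from $\cS(A)$.
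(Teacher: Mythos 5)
Your reduction of $\cN$ under product-state $\cX$-encodings to the cq-channel $W_\rho=\cN(\rho)$, $\rho\in\cX$, followed by a verbatim application of Theorem~\ref{thm:cq-DI-capacity}, is correct and is exactly the argument the paper intends (it states the corollary as an immediate consequence of the cq-channel result, with $\widetilde{\cX}=\cN(\cX)$). Your extra remark on measurability of $W=\cN\circ\id_\cX$ is a harmless, valid observation that the paper leaves implicit.
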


Both Theorem \ref{thm:cq-DI-capacity} and Corollary \ref{cor:restricted-qq-channel} give bounds for the so-called \emph{pessimistic} capacity, which means that the capacity is a rate realised for all sufficiently large values of $n$, i.e.~as the limit inferior. It might of course be possible to find a diverging subsequence of block lengths for which the rate converges to a higher number. The largest such rate is the \emph{optimistic capacity}, defined as superior limit superior. The known non-simultaneous lower bounds for the optimistic capacities can be improved through our new Quantum Hypothesis Testing Lemma, similarly to the pessimistic case. 

\begin{theorem}
\label{thm:optimistic}
The slightly superlinear optimistic DI capacity of a cq-channel $W:\cX\rightarrow\cS(B)$ is bounded as follows (for any $\lambda_1,\lambda_2>0$ with $\lambda_1+\lambda_2<1$):
\[
\begin{split}
  \frac14{\overline{d}_M\! \left(\!\sqrt{\!\widetilde{\cX}}\right)}
   \leq \dot{C}_{\text{DI}}^\text{opt}(W) 
   &\leq \limsup_{n\rightarrow\infty} \frac{1}{n\log n} N_{\text{DI}}(n,\lambda_1,\lambda_2) \\ 
   &\leq\frac12 \overline{d}_M\!\left(\!\sqrt{\!\widetilde{\cX}}\right).
\end{split}
\]
Similarly (cf.~Corollary \ref{cor:restricted-qq-channel}), 
\(
\frac14\overline{d}_M\!\left(\!\sqrt{\!\widetilde{\cX}}\right)
\leq\dot{C}_{\text{DI}}^{\text{opt}}(\cN\vert_{\cX})
\leq\frac12\overline{d}_M\!\left(\!\sqrt{\!\widetilde{\cX}}\right).
\) 
\hfill\qedsymbol
\end{theorem}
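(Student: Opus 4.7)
The plan is to mirror the proof of Theorem \ref{thm:cq-DI-capacity} but with $\liminf$ replaced by $\limsup$ throughout, and to work along a subsequence of blocklengths tailored to the upper Minkowski dimension. For the upper bound, the inequality proved in \cite{CDBW:DI_classical} that controls $N_{\text{DI}}(n,\lambda_1,\lambda_2)$ in terms of a $\delta_n$-packing of the output state space holds pointwise in $n$ and, on taking $\limsup_n$, yields $\tfrac12 \overline{d}_M(\sqrt{\widetilde{\cX}})$ immediately, since this is exactly how the upper Minkowski dimension enters.

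For the lower bound, set $\bar d := \overline{d}_M(\sqrt{\widetilde{\cX}})$ and fix $\epsilon,t>0$ and $\alpha<\tfrac14$. By definition of the upper Minkowski dimension, there is a sequence of scales $\delta_k\to 0$ and a constant $K>0$ such that $\sqrt{\widetilde{\cX}}$ admits a $\delta_k$-packing $\cX_{\delta_k}\subset\cX$ with $|\cX_{\delta_k}|\geq (K/\delta_k)^{\bar d-\epsilon}$ for every $k$. Choose integer blocklengths $n_k=\lceil \delta_k^{-1/\alpha}\rceil$, so that $\delta_k \sim n_k^{-\alpha}$. For each $k$ I would then repeat the chain of arguments inside the proof of Theorem \ref{thm:cq-DI-capacity}: pick a constant-Hamming-distance code $\cC_t\subset \cX_{\delta_k}^{n_k}$ whose size is controlled by Eq.~\eqref{eq:size_Ct}, verify the trace-distance hypothesis via Eq.~\eqref{eq:Trdist->Eucl}, and invoke Theorem \ref{thm:qDI_construction} to extract an $(n_k,N_k,\lambda_1,\lambda_2)$-DI code with $N_k \geq 2^{\alpha(1-t)(\bar d-\epsilon)\, n_k\log n_k - O(n_k)}$. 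This already gives $\limsup_{n\to\infty} \tfrac{1}{n\log n}\log N_{\text{DI}}(n,\lambda_1,\lambda_2) \geq \alpha(1-t)(\bar d-\epsilon)$, and letting $\epsilon,t\to 0$ and $\alpha\nearrow\tfrac14$ produces the bound $\tfrac14 \bar d$. The fully quantum, product-state-restricted analogue is identical with $W(\cX)$ replaced by $\cN(\cX)$, since Lemma \ref{lemma:qHTL} and Theorem \ref{thm:qDI_construction} depend only on the output states and their pairwise trace distances.

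The main obstacle is the bookkeeping required to pass from the continuous sequence of scales $\delta_k$ supplied by the $\limsup$ in the definition of $\overline{d}_M$ to integer blocklengths $n_k$: the rounding in $n_k=\lceil\delta_k^{-1/\alpha}\rceil$ must not erode either the asymptotic identity $\delta_k \sim n_k^{-\alpha}$ or the polynomial lower bound $|\cX_{\delta_k}|\geq (K/\delta_k)^{\bar d-\epsilon}$. Both are robust under constant multiplicative perturbations, which can be absorbed into the arbitrarily small $\epsilon$; once this verification is done, every other step carries over mechanically from the pessimistic proof.
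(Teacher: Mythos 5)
Your proposal is correct and matches the paper's intended argument: the paper gives no explicit proof of Theorem \ref{thm:optimistic}, stating only that it follows from the Quantum Hypothesis Testing Lemma ``similarly to the pessimistic case,'' which is precisely your adaptation of Theorem \ref{thm:cq-DI-capacity} with $\liminf$ replaced by $\limsup$, codes built only along the subsequence of blocklengths $n_k=\lceil\delta_k^{-1/\alpha}\rceil$ coming from the scales achieving $\overline{d}_M$, and the converse of \cite{CDBW:DI_classical} read with a $\limsup$. Your bookkeeping point is handled correctly: since $n_k^{-\alpha}\leq\delta_k$ the $\delta_k$-packing is an $n_k^{-\alpha}$-packing, and the rounding losses are absorbed into $\epsilon$.
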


\section{Contrasting example: \\
optimality of the packing}\label{sec:example}
One of the most interesting open problems in both our classical and quantum DI settings is closing the gap between the upper and lower capacity bounds in the general case, which differ on the prefactor ($\frac{1}{4}$ in the lower bound and $\frac12$ in the upper bound). It was already noted in our previous work \cite{CDBW:DI_classical}, that if all we wanted was a packing in the output probability space, we could get a lower bound with prefactor $\frac12$, solving the problem. However, the application of either classical or quantum hypothesis testing lemma and the subsequent code construction, induces us to require a trace distance exponentially close to one, and this exponent is ruled by $\sqrt{n}$. This can clearly be seen in Eq.~\eqref{eq:Main_achiev_exponent} and the discussion that follows. There, we require the parameter $\alpha\leq1/4$, so that we can have the $\sqrt{n}$ exponent needed to apply Theorem \ref{thm:qDI_construction}. As we will see, an optimal packing for DI can only require a prefactor of $\frac12$. Furthermore, using such a packing, we will be able to construct a reliable code (with bounded errors that vanish for increasing $n$) for a particular classical-quantum channel $W:\cX\rightarrow B$ for which we can bypass the use of the quantum hypothesis testing lemma.

Let the cq-channel $W$ assign for each (classical) input letter $x\in\cX$, a pure state $W_x=\ketbra{\psi_x}{\psi_x}\in \cS_0(B)=\{\ketbra{\psi}{\psi}: \ket{\psi}\in B\}$. The output density matrix space is therefore a subset $\widetilde\cX\subset\cS_0(B)$. 

\begin{theorem}\label{thm:example}
The superlinear DI capacity of the cq-channel $W$ described above is $\dot{C}_\text{DI}(W)=\frac12\underline{d}_M\left(\widetilde{\cX}\right)$.
\end{theorem}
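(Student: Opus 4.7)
The plan is to prove the two matching bounds. The upper bound $\dot{C}_\text{DI}(W)\leq \tfrac12\underline{d}_M(\widetilde\cX)$ is immediate from Theorem \ref{thm:cq-DI-capacity}: since each output $W_x=\ketbra{\psi_x}{\psi_x}$ is a rank-one projector, $\sqrt{W_x}=W_x$, so $\sqrt{\widetilde\cX}=\widetilde\cX$ and the general upper bound already delivers the claim. All the work is in the matching lower bound, where the factor-of-two gain over Theorem \ref{thm:cq-DI-capacity} comes from bypassing the entropy-typical projector and instead using the rank-one projectors onto the code states themselves as identification tests.

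For the lower bound I would follow the skeleton of the proof of Theorem \ref{thm:cq-DI-capacity}, but with $\alpha$ allowed to approach $\tfrac12$ instead of being capped at $\tfrac14$. Fix $d=\underline{d}_M(\widetilde\cX)$, small $\epsilon,t>0$, and $\alpha<\tfrac12$. Choose a packing $\cX_{n^{-\alpha}}\subset\cX$ of size at least $(Kn^\alpha)^{d-\epsilon}$ whose images in $\widetilde\cX$ are $n^{-\alpha}$-separated in, say, trace distance, and pass to a subcode $\cC_t\subset\cX_{n^{-\alpha}}^n$ with minimum Hamming distance $tn$ and $|\cC_t|\geq 2^{-n}|\cX_{n^{-\alpha}}|^{n(1-t)}$, exactly as in Equation \eqref{eq:size_Ct}.

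For each codeword $u_j=x_1^j\cdots x_n^j\in\cC_t$ I would take $\ket{\psi_{u_j}}=\bigotimes_i\ket{\psi_{x_i^j}}$ and define the test $E_j:=\ketbra{\psi_{u_j}}{\psi_{u_j}}$. Then $\Tr W_{u_j}E_j=1$, so there is no error of the first kind, and
\[
  \Tr W_{u_j}E_k
   = \bigl|\!\braket{\psi_{u_j}}{\psi_{u_k}}\!\bigr|^2
   = \prod_{i=1}^n \bigl|\!\braket{\psi_{x_i^j}}{\psi_{x_i^k}}\!\bigr|^2 .
\]
At each of the $\geq tn$ positions where $u_j$ and $u_k$ differ, the packing condition translates (for pure states) into $|\!\braket{\psi_{x_i^j}}{\psi_{x_i^k}}\!|^2\leq 1-c\,n^{-2\alpha}$ for a positive constant $c$, and therefore $\Tr W_{u_j}E_k\leq (1-c\,n^{-2\alpha})^{tn}\leq\exp(-ct\,n^{1-2\alpha})$, which vanishes because $\alpha<\tfrac12$. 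This is the decisive point: the pure-state identity $\Tr W_{u_j}E_k=|\!\braket{\psi_{u_j}}{\psi_{u_k}}\!|^2$ factorises multiplicatively over tensor factors and controls the second-kind error directly, so we no longer need the trace distance to be $1-\exp(-\Omega(\sqrt n))$ close to one (which is what forced $\alpha\leq\tfrac14$ in Theorem \ref{thm:qDI_construction}); it suffices that $(1-c\,n^{-2\alpha})^{tn}$ be small, which only costs $\alpha<\tfrac12$.

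The counting step is then identical to Theorem \ref{thm:cq-DI-capacity}: $N\geq |\cC_t|\geq 2^{\alpha(1-t)(d-\epsilon)n\log n-O(n)}$, so the $n\log n$-rate is at least $\alpha(1-t)(d-\epsilon)$, and letting $\alpha\to\tfrac12$, $t,\epsilon\to 0$ gives $\dot C_{\text{DI}}(W)\geq\tfrac12\underline{d}_M(\widetilde\cX)$, matching the upper bound. The main obstacle is conceptual rather than technical: it is the realisation that the Helstrom/typicality machinery of Lemma \ref{lemma:qHTL} is strictly suboptimal for pure-state outputs and can be replaced by the exact projective test, which is precisely the move that upgrades the prefactor from $\tfrac14$ to $\tfrac12$.
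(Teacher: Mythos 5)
Your proposal is correct and follows essentially the same route as the paper: the same tensor-product pure-state projectors as tests (zero first-kind error), the same packing-plus-Hamming-distance subcode construction with $\alpha\to\tfrac12$, the same counting, and the upper bound inherited from the general converse with $\sqrt{\widetilde\cX}=\widetilde\cX$. The only (harmless) difference is that you bound the second-kind error directly via the multiplicative overlap $\prod_i|\braket{\psi_{x_i^j}}{\psi_{x_i^k}}|^2$, whereas the paper passes through the trace-distance estimate of Eq.~\eqref{eq:Main_achiev_exponent}; your version is, if anything, slightly cleaner.
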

\begin{proof}
We define a code constructed by $N$ codewords $u_j=x_1\dots x_n\in\cX^n$  (with $j\in[N]$) and effect identification operators $E_j=\bigotimes_{i=1}^nW_{x_i}$. With this construction notice that, for all $j\in[N]$,
\[
\Tr W_{u_j} E_j 
 = \Tr W_{x^n} W_{x^n}
 = 1.
\]
Therefore, the first type of error is zero by definition. Indeed, $\lambda_1 = \sup_j\Tr W_{u_j}(\1 - E_j) = 1-\sup_j\Tr W_{u_j}(E_j)=0$. 

To bound the second kind of error we need to do a bit more work. However, we can retrace the same techniques used in the proof of Theorem \ref{thm:cq-DI-capacity}: we start with an $\cX_{n^{-\alpha}}$ packing, which by definition of the lower Minkowski dimension $d$ has size lower bounded by $|\cX_{n^{-\alpha}}|\geq(Kn^\alpha)^{d-\epsilon}$. Then we choose a code $\cC_t\subset\cX_{n^{-\alpha}}^n$ with minimum Hamming distance between its elements $d_H(x^n,x'^n)\geq tn$ for all $x^n\neq x'^n\in\cC_t$ with $t>0$ an arbitrarily small constant. With this construction --which up until now has followed the same steps as the previous proof-- we get exactly to Eq.~\eqref{eq:Main_achiev_exponent}:
\[
-\ln\left(1-\frac12\|W_{x^n}-W_{{x'}^n}\|_1\right)\geq\frac{t}{4}n^{1-2\alpha}.
\]
Now, however, as hinted in the introduction of this section, we have no restrictions on the value of $\alpha$ apart that we need to be able to bound the second type of error. We notice that we can choose it arbitrarily close to $\frac12$. Indeed, let $\gamma>0$ be an arbitrarily small constant, then we choose $\alpha=\frac{1-\gamma}{2}$. And therefore, with some algebra onto Eq.~\eqref{eq:Main_achiev_exponent} (reproduced above) we get
\[
\frac12\|W_{x^n}-W_{{x'}^n}\|_1\geq1-2^{-n^{\gamma}t/4}.
\]
We can use this to bound directly the second type of error. Let $u_j=x^n,u_k=x'^n$, then:
\[
\Tr W_{u_j}E_k \leq \Tr W_{u_k}E_k - (1-2^{-n^{\gamma}t/4}) = 2^{-n^{\gamma}t/4}.
\]
This is of course true for all $x^n\neq x'^n\in\cC_t$, so we can take $\lambda_2=2^{-n^{\gamma}t/4}$, which vanishes for increasing $n$. Both errors are converging to zero, so $\cC_t$ is a good DI code. It is only left to bound its size. We can follow again the combinatorial techniques in the proof of Theorem \ref{thm:cq-DI-capacity}, the size of $\cC_t$ is bounded in Eq.~\eqref{eq:size_Ct}:
\[
N=\lvert{\cC_t}\rvert 
\geq 2^{-n} \lvert{\cX_{n^{-\alpha}}}\rvert^{n(1-t)}
\geq 2^{-n}(Kn^\frac{1-\gamma}{2})^{n(d-\epsilon)(1-t)}.
\]
Taking the logarithm, we get:
\[
\log N\geq n\left(\frac{1-\gamma}{2}\right)(1-t)(d-\epsilon)\log n+o(n\log n).
\]
Finally, as we can choose the parameters $\gamma, t, \epsilon>0$ arbitrarily small, the linearithmic DI capacity is indeed lower bounded by 
\[
\dot{C}_\text{DI}(W)
    \geq \liminf_{n\rightarrow\infty}\frac{1}{n\log n} \log N
    \geq \frac12 \underline{d}_M\!\left(\!\sqrt{\!\widetilde{\cX}}\right)
\]
As already discussed in previous sections, the general converse proof from \cite{CDBW:DI_classical} gives us $\dot{C}_\text{DI}(W)\leq\frac12 \underline{d}_M\!\left(\!\sqrt{\!\widetilde{\cX}}\right)$. Therefore, the upper and lower bounds match, meaning that
\[\dot{C}_\text{DI}(W)=\frac12 \underline{d}_M\!\left(\!\sqrt{\!\widetilde{\cX}}\right)=\frac12\underline{d}_M\!\left(\widetilde{\cX}\right).\]
The last equality following from the fact that the square root map on pure states does not affect the dimension.
\end{proof}

\section{Conclusions and discussion}
\label{sec:conclusions}
In the present contribution we provide a quantum generalisation of the main tool used to construct general codes for deterministic identification, the Hypothesis Testing Lemma \cite{CDBW:DI_classical}. Before this result, some codes and capacity lower bounds for the quantum case had been calculated through the use of a quantum measurement that effectively reduced the quantum case to a classical problem where the classical Hypothesis Testing Lemma could be applied. This trick, however, restricted the construction of the quantum channel DI codes to have simultaneous decoders. With the new Quantum Hypothesis Testing Lemma \ref{lemma:qHTL} this restriction can be lifted, and we can construct directly general codes for DI over cq-channels and quantum channels with product states as input. We not only provide a code construction (Theorem \ref{thm:qDI_construction}), but we also analyse the capacities it can achieve. The unrestricted quantum DI capacity, previously lower bounded by the simultaneous capacity, can now be tightened and written in terms of the Minkowski dimension of the square rooted space of output quantum states (Theorem \ref{thm:cq-DI-capacity}). Both lower and upper bounds are now written in terms of this object, highlighting its relevance and providing the intuition that the true value of the capacity might be proportional to it: $\dot{C}_{\text{DI}}(W)=\alpha\,\underline{d}_M\!\left(\!\sqrt{\!\widetilde{\cX}}\right)$ with $\frac12\leq\alpha\leq\frac14$. 
The results also extends to the optimistic DI capacity
(Theorem \ref{thm:optimistic}).

We also presented and example channel for which we can construct an optimum code. Its capacity is $\dot{C}_{\text{DI}}(W)=\frac12\underline{d}_M\!\left(\!\sqrt{\!\widetilde{\cX}}\right)$. This result gives strong intuition that this might be the actual capacity value in the general case.

These results draw our attention back to the simultaneous case, which now has lower and upper capacity bounds expressed in terms of different objects. Specifically, it is bounded from above by the unrestricted case, hence the following natural questions arise: can we develop a converse proof directly for the simultaneous case? And will the resulting upper bound be written in terms of some $d_M\!\left(\!\sqrt{\!\widehat{\cX}}\right)$, matching the simultaneous lower bound?
On the one hand, a result like this 
could be taken to suggest that there is a separation between the two flavours of DI capacity. 
On the other hand, however, while it is clear that for any channel 
\(
  d_M\!\left(\!\sqrt{\!\widehat{\cX}}\right) \leq d_M\!\left(\!\sqrt{\!\widetilde{\cX}}\right),
\)
we have no evidence yet of a channel with a separation between these objects, which would be a necessary prerequisite for a capacity separation. Among other difficulties, there is the problem of maximization of the dimension over all possible measurements. 

\section*{Acknowledgments}
{\small
P. Colomer, H. Boche and C. Deppe were supported in part by the German Federal Ministry of Research, Technology and Space (BMFTR) in the programme of ``Souverän. Digital. Vernetzt.'' within the project 6G-life, project no. 16KISK002 and 16KISK263. 
P. Colomer was supported in part by the BMBF Quantum Programm QD-CamNetz, grant 16KISQ077, QuaPhySI, grant 16KIS1598K, and QUIET, grant 16KISQ093.
P. Colomer and A. Winter are supported by the Institute for Advanced Study of the Technical University Munich.
H. Boche and C. Deppe are supported by the German Federal Ministry of Research, Technology and Space (BMFTR) within the national initiative on Post Shannon Communication (NewCom) under grant 16KIS1003K and 16KIS1005, within the national initiative QuaPhySI -- Quantum Physical Layer Service Integration under grant 16KISQ1598K and 16KIS2234 and within the national initiative ``QTOK -- Quantum tokens for secure authentication in theory and practice'' under grant 16KISQ038. They have also been supported by the initiative 6GQT (financed by the Bavarian State Ministry of Economic Affairs, Regional Development and Energy).

H. Boche has further received funding from the German Research Foundation (DFG) within Germany’s Excellence Strategy EXC-2092 -- 390781972. C. Deppe received further funding under grants 16KISQ028, 16KISR038, 16KISQ0170, and DE1915:/2-1.

A. Winter was supported by the European Commission QuantERA project ExTRaQT (Spanish MICIN grant PCI2022-132965), by the Spanish MICIN (project PID2022-141283NB-I00) with the support of FEDER funds, by the Spanish MICIN with funding from European Union NextGenerationEU (PRTR-C17.I1) 
and the Generalitat de Catalunya, by the Spanish MTDFP through the QUANTUM ENIA project: Quantum Spain, funded by the European Union NextGenerationEU within the framework of the ``Digital Spain 2026 Agenda'', 

\vspace{-2pt}
\noindent and by the Alexander von Humboldt Foundation.
}

\AtNextBibliography{\footnotesize}
\printbibliography

\end{document}